\DeclarePairedDelimiter\ceil{\lceil}{\rceil}
\DeclarePairedDelimiter\floor{\lfloor}{\rfloor}
\newtheorem{theorem}{\it{\bf Theorem}}
\newtheorem{lemma}{\bf Lemma}
\let\originalleft\left
\let\originalright\right
\renewcommand{\left}{\mathopen{}\mathclose\bgroup\originalleft}
\renewcommand{\right}{\aftergroup\egroup\originalright}
\DeclareMathOperator{\E}{\mathbb{E}}
\newcommand{\abl }{\left|}
\newcommand{\abr }{\right|}
\newcommand{\sql }{\left[}
\newcommand{\sqr }{\right]}
\newcommand{\crl}{\left\{ }
\newcommand{\crr}{\right\} }
\newcommand{\prl}{\left(}
\newcommand{\prr}{\right)}
\newcommand{\eps}{\varepsilon}
\newcommand{\argmin}{\operatornamewithlimits{arg\ min}}
\newcommand{\cmin}{C_\mathrm{min}}
\newcommand{\thetainv}{\theta^{-1}}
\newcommand{\Zt}{Z_{\theta}}
\newcommand{\nn}{\nonumber}
\newcommand{\defi}{\triangleq}
\newcommand{\MSE}{\mathrm{MSE}}
\newcommand{\MSEo}{\mathrm{MSE}_{\mathrm{o}}}
\newcommand{\vX}{\mathbf{X}}
\newcommand{\vw}{\mathbf{w}}
\newcommand{\vwo}{\mathbf{w}_{\mathrm{o}}}
\newcommand{\vU}{\mathbf{U}_{\scriptstyle \vtheta}}
\newcommand{\vOne}{\mathbf{1}}
\newcommand{\vZ}{\mathbf{Z}_{\scriptstyle \vtheta}}
\newcommand{\vtheta}{\boldsymbol{\theta}}
\newcommand{\Yh}{\hat{Y}}
\newcommand{\C}{C}
\newcommand{\G}{G}
\newcommand{\D}{D}
\newcommand{\Not}{N_{\mathrm{o}}\prl \tau \prr}
\newcommand{\aot}{a_{\mathrm{o}}\prl \tau \prr}
\newcommand{\Vt}{V \prl \tau \prr}
\newcommand{\cost}{\mathrm{Cost}}
\newcommand{\costt}{\mathrm{Cost}_{\tau}}
\newcommand{\tinv}{\tau^{-1}}
\newcommand{\kappat}{\kappa_{\tau}}
\begin{document}
\title{Cost-Performance Tradeoffs in Fusing Unreliable Computational Units
\thanks{This work was supported in part by Systems on Nanoscale Information fabriCs (SONIC), one of the six SRC STARnet Centers, sponsored by MARCO and DARPA, and in part by the Center for Science of Information (CSoI), an NSF Science and Technology Center, under
grant agreement CCF-0939370.
Some of the results in this paper appeared in an early version at a conference \cite{DonmezRSV2016}.}}
\author{Mehmet A.~Donmez\thanks{donmez2@illinois.edu} \and Maxim Raginsky\thanks{maxim@illinois.edu} \and Andrew C.~Singer\thanks{acsinger@illinois.edu}
\and Lav~R.~Varshney\thanks{varshney@illinois.edu}}
\maketitle
\begin{abstract}
We investigate fusing several unreliable computational units
that perform the same task.
We model an unreliable computational outcome as an additive perturbation to its error-free result in terms of its fidelity and cost.
We analyze performance of repetition-based strategies
that distribute cost across several unreliable units and fuse their outcomes.
When the cost is a convex function of fidelity,
the optimal repetition-based strategy in terms of incurred cost while achieving a target
mean-square error (MSE) performance may fuse several computational units.
For concave and linear costs, a single more reliable unit incurs lower cost
compared to fusion of several lower cost and less reliable units
while achieving the same MSE performance.
We show how our results give insight into problems from theoretical neuroscience, circuits, and crowdsourcing.
\end{abstract}

\section{Introduction}
We consider the problem of fusing outcomes of several unreliable computational units that perform the same computation under cost and fidelity constraints.
We formalize the relationship between the fidelity of each unit and the cost
associated with it, and explore this tradeoff in a number of practical problems.
Consider, for instance, the capacity of an additive white Gaussian noise (AWGN) channel, which is a
logarithmic function of the signal-to-noise (SNR) ratio.
In this scenario, the capacity can be increased at the expense of requiring a higher SNR, which introduces a tradeoff between cost (SNR) and performance (rate).
Note also that the Fisher information in estimation is often a linear function of SNR, leading to a different cost-performance tradeoff \cite{Kay2010}.

Building reliable systems out of unreliable components has attracted substantial interest in
circuits and systems \cite{Neumann1956,Tryon1962,WinogradC1963},
information theory \cite{Pippenger1988, HajekW1991, EvansP1998}, and
signal processing \cite{YangGK2016}.
In \cite{Neumann1956}, Von Neumann investigated error in logic circuits from a statistical point of view and demonstrated that repeated computations followed by majority logic may yield reliable
results even when the underlying components are unreliable.
In  \cite{Tryon1962}, Tryon introduced a technique called quadded logic,
which corrects errors by a redundant design of logic gates.
Moreover, the authors of \cite{Pippenger1988, HajekW1991, EvansP1998} investigated
reliable computation by formulas in the presence of noise.
More recently, the authors of \cite{YangGK2016} considered energy-reliability tradeoffs in computing linear transforms implemented on unreliable components.

Fusion of the outputs collected from several sensors has been considered
in distributed detection, estimation, classification, and optimization in sensor networks \cite{ViswanathanV1997, IshwarPRP2005, RiberioG2006_1, RiberioG2006_2, BarbarossaSL2013,AldosariM2004,MarcoN2004}.
Often, spatially distributed sensors locally perform a decision-making task and
send their outputs, under bandwidth constraints, to a fusion center that forms a final decision.
In most practical applications, these sensors are battery-powered devices with limited accuracy
and computational capabilities,
so their performance is critically affected by the resources allocated to them,
introducing a cost-performance tradeoff.
The authors of \cite{ AldosariM2004} studied tradeoffs between the number of sensors,
resolution of quantization at each sensor, and SNR.
Similarly, \cite{MarcoN2004} considered the tradeoff between reliability and efficiency in distributed source coding for field-gathering sensor networks.
In general, the main goal is to make a reliable final decision in a cost-efficient manner based on these
unreliable sensors subject to resource and reliability constraints.

A fundamental question that arises in fusing several unreliable computational units is
how a limited budget should be allocated across several unreliable units,
where adding a new unit incurs a baseline cost as well as an incremental cost,
and also increases the cost of fusion.
That is, what is the {\it optimal} approach in terms of cost-performance tradeoff?
Although existing work in fault-tolerant computing and in-sensor networks
focus on different pieces of this problem, a more general treatment that jointly considers
cost and performance is necessary.
This paper is an attempt to combine insights from both fields into a unified framework
that captures characteristics of a range of problems.
In particular, we show how our framework and results are connected to problems from
neuroscience, circuits, and crowdsourcing in Section~\ref{sec:app}.

In this paper, we present an abstract framework to explore the fundamental tradeoff
between cost and performance achievable through forms of redundancy.
We model unreliability in any computational unit as an additive random perturbation,
where the variance of the perturbation is inversely related to its fidelity.
We cast the main task as inference of the error-free computation based
on noisy computational outcomes.
Each computational unit incurs some cost, which is a function of its fidelity,
that includes a baseline cost incurred simply to operate the unit.

We define a class of repetition-based strategies, where each strategy distributes
the total cost across several unreliable computational units and fuses their outputs.
We note that the fusion operation also incurs some cost, which is a function of the number of
individual computational units to be fused.
We measure the inference performance of each strategy in terms of MSE
between its final output and the error-free computation.

We consider optimal repetition-based strategies under convex, linear, and concave cost functions rather than restricting to specific cost functions.
For convex costs, there are two main cases.
In the first case, we prove that using only a single and more reliable computational outcome is more cost-efficient than the fusion of several lower cost but less reliable computational outcomes.
In the second case, however, we demonstrate that the optimal strategy uses several
computational outcomes instead of a single more reliable one.
Intuitively, the convexity of the cost function disperses the cost across
several less reliable computational outcomes with smaller individual costs.
For linear or concave costs, the optimal strategy is to use a single and more reliable computational outcome.

\section{Problem Description}
\label{sec:problem}
Consider the problem of fusing outcomes of several unreliable computational
units subject to cost and fidelity constraints. 
We first introduce a model of an unreliable
computational outcome as an additive perturbation to its error-free result in terms of its fidelity and cost.
We next consider a class of repetition-based strategies that distribute cost across
several parallel unreliable units and fuse their outcomes to
produce a final estimate of the error-free computation.

Suppose a vector of input signals $\vX = \prl X_1, \ldots, X_k \prr $ is processed
to yield the error-free computation,
\begin{align*}
	Y = f\prl \vX \prr,
\end{align*}
where $f\prl \cdot \prr$ is some arbitrary target function.
Instead, we observe an unreliable computational outcome,
\begin{align*}
	\Zt = Y + U_{\theta},
\end{align*}
where $U_{\theta}$ is a zero-mean perturbation with variance $\thetainv$.
Here, $\theta$ is the fidelity of the unreliable computational outcome $\Zt$.
We assume that $Y$ and $U_{\theta}$ are uncorrelated,
that is, $\E \sql Y U_{\theta} \sqr = \E \sql Y \sqr \E \sql U_{\theta} \sqr$ holds,
whether or not $Y$ is a random variable.

By Chebyshev's inequality, the unreliable outcome $\Zt$ with fidelity
$\theta>0$ satisfies, for any $\eps>0$,
\begin{align}
	\Pr\prl \abl \Zt - Y \abr\geq\varepsilon\prr \leq \frac{1}{\varepsilon^2\theta}.
	\label{eq:chebyshev}
\end{align}
This implies the unreliable outcome $\Zt$ converges to the error-free computation in probability as the fidelity tends to infinity.
However, as the fidelity parameter $\theta$ increases, the cost $\C(\theta)$ incurred to guarantee that level of fidelity also increases, introducing a {\it cost-fidelity tradeoff}.
Note that this holds both when $X_i$ for $i=1,\ldots,k$, or $Y$, are random as well as when
they are purely deterministic.

In this model, we must incur a cost $\C \prl \theta \prr$ to get
the unreliable outcome $\Zt$ with fidelity $\theta>0$,
which we assume to be a strictly increasing function of $\theta$.
In particular, we assume
\begin{align*}
	\C \prl \theta \prr = \cmin + \G \prl \theta \prr,
\end{align*}
where
$
	\cmin \triangleq \inf_{\theta>0} \C \prl \theta\prr \geq 0
$
is the minimum (baseline) cost, and $\G \prl \theta\prr$ is an increasing and twice differentiable
incremental cost function with $G(0)=0$.
In the sequel, we focus on three classes of cost functions: convex, linear, and concave.

We define a class of repetition-based strategies that fuse
the outputs of several computational units to estimate $Y$.
For any positive integer $N$, a repetition-based strategy $S_{N}$,
with weights
$
	\vw = \prl w_1,\ldots,w_N \prr \in \mathbb{R}^N
$
and fidelities 
$
	\vtheta = \prl \theta_{1},\ldots,\theta_{N}\prr\in\prl 0,\infty\prr^N,
$
linearly combines the outcomes of $N$ parallel unreliable units with fidelities
$\vtheta$ using the weights $\vw$.
That is, if we denote each unreliable outcome with
the fidelity $\theta_i$ and the cost $\C\prl \theta_i \prr$ as
\begin{align*}
	Z_{\theta_i} = Y + U_{\theta_{i}} ,
\end{align*}
for $i=1,\ldots,N$, then the final output of this strategy $S_N$ is
\begin{align}
	\Yh_N \prl \vw; \vtheta \prr 
	&\triangleq \vw^T \vZ
	= Y \prl \vw^T\vOne \prr + \vw^T \vU \label{eq:cont_single}
\end{align}
where $\vZ \defi \prl Z_{\theta_{1}},\ldots, Z_{\theta_{N}}\prr$,
$\vU \defi \prl U_{\theta_{1}},\ldots, U_{\theta_{N}}\prr$, and $\vOne = \prl 1,\ldots,1 \prr\in\mathbb{R}^N$ is a vector of ones.
In particular, we assume that $U_{\theta_i}$s are uncorrelated with each other.

The cost incurred by the strategy $S_N$ with fidelities $\vtheta$ is
\begin{align}
	\sum_{i=1}^N \C \prl \theta_{i} \prr+ \D \prl N\prr, \nn
\end{align} 
where $\D \prl N\prr$ is the fusion cost, i.e., the cost of linear combination.
We assume that the function
$
	\D :\mathbb{Z}_+\rightarrow\mathbb{R}_+ 
$
is increasing, as fusing a larger number of computational units has higher cost than fewer.
Note that the fusion cost is super-linear in $N$ in that it
requires at least $O\prl N \prr$ multiplications and additions.
In particular, we assume that $\D\prl N \prr$ is convex in $N$.

\section{Performance Analysis}
\label{sec:perf}
Here, we consider the MSE performance of each repetition-based strategy
in estimating the error-free computation $Y$.
For any positive integer $N$, the strategy $S_N$ with a weight vector $\vw\in\mathbb{R}^N$
and a fidelity vector $\vtheta\in\prl 0,\infty\prr^N$ achieves the MSE
\begin{align}
	\MSE \prl \vw, \vtheta \prr \defi \E\sql \prl \Yh_N \prl \vw; \vtheta \prr - Y \prr^2\sqr. \label{eq:MSEdefined}
\end{align}
In particular, we derive the minimum MSE (MMSE) achievable by
this strategy $S_N$ while producing an unbiased output:
\begin{align}
	\MSEo \prl \vtheta \prr
	\defi \min\limits_{\vw^T \vOne = 1} \MSE \prl \vw, \vtheta \prr, \nn
\end{align}
where $\vwo$ is the corresponding minimizer.

\begin{lemma}
\label{lemma:compare}
Suppose that for any positive integer $N$, the strategy $S_N$ fuses the outcomes of
$N$ parallel computational units with fidelities
$\vtheta \in \prl 0,\infty\prr^N$.
Then the MMSE achievable by this strategy $S_N$ while producing an unbiased estimate
of $Y$, and the corresponding weights are
\begin{align}
	\MSEo \prl \vtheta \prr
	= \frac{1}{\vtheta^T\vOne}, \;\;
	\vwo
	= \frac{\vtheta}{\vtheta^T\vOne}, \label{eq:optimalWeights}
\end{align}
respectively.
\end{lemma}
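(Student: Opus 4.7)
The plan is to substitute the definition of $\Yh_N(\vw; \vtheta)$ from \eqref{eq:cont_single} into the MSE expression \eqref{eq:MSEdefined}, exploit the unbiasedness constraint to eliminate the signal term, and then solve a simple quadratic optimization over the weights.

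First I would expand
\begin{align*}
\MSE(\vw, \vtheta) = \E\sql \prl Y(\vw^T \vOne - 1) + \vw^T \vU \prr^2 \sqr.
\end{align*}
Using that $Y$ and each $U_{\theta_i}$ are uncorrelated, and that the perturbations $U_{\theta_i}$ are zero-mean, the cross term vanishes, giving
\begin{align*}
\MSE(\vw, \vtheta) = (\vw^T \vOne - 1)^2 \, \E[Y^2] + \E\sql (\vw^T \vU)^2 \sqr.
\end{align*}
Under the unbiasedness constraint $\vw^T \vOne = 1$ the first term drops, and since the $U_{\theta_i}$ are mutually uncorrelated with $\mathrm{Var}(U_{\theta_i}) = \theta_i^{-1}$, the second term collapses to $\sum_{i=1}^N w_i^2 / \theta_i$.

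Next I would minimize $\sum_{i=1}^N w_i^2 / \theta_i$ subject to $\sum_{i=1}^N w_i = 1$. I plan to apply the Cauchy--Schwarz inequality in the form
\begin{align*}
1 = \prl \sum_{i=1}^N w_i \prr^2 = \prl \sum_{i=1}^N \frac{w_i}{\sqrt{\theta_i}} \sqrt{\theta_i} \prr^2 \leq \prl \sum_{i=1}^N \frac{w_i^2}{\theta_i} \prr \prl \sum_{i=1}^N \theta_i \prr,
\end{align*}
which immediately yields $\sum_i w_i^2/\theta_i \geq 1/(\vtheta^T \vOne)$, i.e., $\MSEo(\vtheta) \geq 1/(\vtheta^T \vOne)$. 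Equality in Cauchy--Schwarz requires $w_i/\sqrt{\theta_i} \propto \sqrt{\theta_i}$, i.e., $w_i \propto \theta_i$; combined with $\vw^T \vOne = 1$ this forces $\vwo = \vtheta/(\vtheta^T \vOne)$. I would then verify by direct substitution that this choice attains $1/(\vtheta^T \vOne)$, closing the argument.

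There is no real obstacle here: the proof is essentially a standard inverse-variance weighting result. The only care needed is the bookkeeping that splits $\MSE$ into a bias piece and a variance piece using the uncorrelatedness assumptions, and then recognizing that the constrained quadratic admits a one-line solution via Cauchy--Schwarz (equivalently, Lagrange multipliers giving $w_i \propto \theta_i$). If desired, one could alternatively argue by substitution $w_N = 1 - \sum_{i<N} w_i$ and setting partial derivatives to zero, but the Cauchy--Schwarz route is cleaner and simultaneously delivers both the minimizer and the minimum value.
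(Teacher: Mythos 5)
Your proposal is correct. The setup is identical to the paper's: both substitute \eqref{eq:cont_single} into \eqref{eq:MSEdefined}, use the uncorrelatedness of $Y$ and the perturbations to kill the cross term, and invoke the constraint $\vw^T\vOne = 1$ to drop the bias term, leaving the variance $\vw^T \Sigma_{\vU}\vw = \sum_{i=1}^N w_i^2/\theta_i$. Where you diverge is the constrained minimization itself. The paper forms the Lagrangian $\tfrac12 \vw^T\Sigma_{\vU}\vw + \lambda(1-\vw^T\vOne)$ for a \emph{general} covariance matrix $\Sigma_{\vU}$, obtains $\vwo = \Sigma_{\vU}^{-1}\vOne / (\vOne^T\Sigma_{\vU}^{-1}\vOne)$ and $\MSEo = 1/(\vOne^T\Sigma_{\vU}^{-1}\vOne)$, and only at the last line specializes to $\Sigma_{\vU} = \mathrm{diag}(\theta_1^{-1},\ldots,\theta_N^{-1})$. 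You instead work with the diagonal form from the outset and apply Cauchy--Schwarz, which delivers the lower bound $1/(\vtheta^T\vOne)$ and, via the equality condition $w_i \propto \theta_i$, the minimizer in one stroke. Your route is more elementary and self-certifying --- it proves global optimality directly rather than relying on first-order conditions of a convex program --- while the paper's Lagrangian computation has the advantage of exhibiting the general inverse-covariance-weighting formula, which would carry over unchanged to the correlated-perturbation extension mentioned in the conclusion. Both arguments are complete and yield \eqref{eq:optimalWeights}.
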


\begin{proof}
We provide the proof in Appendix~\ref{app:compare}.
\end{proof}

Thus, Lemma~\ref{lemma:compare} provides the strategy $S_N$ achieving the MMSE
for a given fidelity vector $\vtheta \in \prl 0,\infty \prr^N$.
For any positive integer $N$, whenever we refer to the strategy $S_N$,
we use the optimal weights given in \eqref{eq:optimalWeights}, so that its output is
\begin{align*}
	\Yh_N \prl \vwo; \vtheta \prr = \vwo^T \vZ = \frac{\vtheta^T \vZ}{\vtheta^T\vOne}.
\end{align*}
We next study a particular scenario, where $U_{\theta}$ is sub-Gaussian.

\subsection{Sub-Gaussian Perturbations}
Here, we consider a case where the perturbation $U_{\theta}$ is sub-Gaussian with parameter $\thetainv$, which means \cite{BoucheronLM2013}
\begin{align}
	\E\sql e^{\lambda U_{\theta}  }\sqr \leq \exp \prl \frac{\lambda^2}{2\theta}\prr,\;\;
	\forall \lambda\in \mathbb{R},
	\label{eq:subGaussianModel}
\end{align}
or equivalently, the probability of absolute deviation of $\Zt$ from $Y$ satisfies, for any $\eps>0$,
\begin{align}
	\Pr\prl \abl \Zt - Y \abr \geq \eps \prr \leq 2\exp\prl -{\eps^2\theta/2} \prr.
	\label{eq:tailBound}
\end{align}
The tail bound in \eqref{eq:tailBound} decreases faster (with increasing $\theta$)
than the bound in \eqref{eq:chebyshev}.
Sub-Gaussian distributions can be used to model a wide range of
stochastic phenomena including
Gaussian and uniform distributions, or distributions with finite or bounded support.
Note that a weighted sum of finitely many 
sub-Gaussian random variables is also sub-Gaussian \cite{BoucheronLM2013}.
By appling this result to the output of a strategy $S_N$ with $\vw\in\mathbb{R}^N$
and $\vtheta\in\prl 0,\infty\prr^N$, we get, for any $\eps>0$,
\begin{align*}
	\Pr\prl \abl \Yh_N \prl \vw; \vtheta \prr - Y \abr \geq \eps \prr 
	\leq 2\exp\prl - \frac{\eps^2 }{ \sum_{i=1}^N w_i^2/\theta_i} \prr. \nn
\end{align*}
The weights minimizing the upper bound under $\vw^T\vOne = 1$, and the resulting bound are
known to be
$
	\vwo = \vtheta/ \vtheta^T\vOne,
$
and
\begin{align}
	\Pr\prl \abl \Yh_N \prl \vwo; \vtheta \prr - Y \abr \geq \eps \prr 
	\leq 2\exp\prl -{\eps^2 \vtheta^T\vOne/2} \prr,\nn
\end{align}
for any $\eps>0$, respectively.

We emphasize that, in this case, even though the performance is measured in terms of probability
of absolute deviation from the error-free computation,
the optimal weights are exactly the same as the ones minimizing the MSE.
Hence, same results apply to both cases when comparing the cost-performance tradeoff
of the repetition-based strategies.

In this section, we analyzed the MSE performance of repetition-based strategies.
More precisely, for any positive integer $N$ and a fidelity vector $\vtheta\in\prl 0,\infty\prr^N$,
we derived the optimal weights for the strategy $S_N$ in terms of minimizing the MSE.
Based on these results, we next investigate the cost-performance tradeoff
for the class of repetition-based strategies under the classes of convex, linear, and concave
cost functions.

\section{Cost-Performance Tradeoff}
\label{sec:cost}
We investigate the performance of repetition-based strategies
under convex, linear, and concave cost functions in terms of the tradeoff between the total incurred cost and the final MSE performance in estimating the error-free computation.

We first analyze the case where the cost $\C \prl \theta \prr$ is a convex function of the fidelity
$\theta$.
We characterize the optimal strategy, based on the desired MSE performance as well as the baseline and fusion cost functions.
In particular, we show that the optimal cost-performance tradeoff may be achieved by some strategy $S_N$ with $N>1$ under certain conditions.

We next consider the case where the cost $\C \prl \theta \prr$ is a linear function of the fidelity parameter $\theta$, and
show that strategy $S_1$ is optimal among repetition-based strategies.
We finally study the concave cost scenario, and demonstrate results similar to the linear cost function case.

To compare cost-performance tradeoffs of repetition-based strategies, we constrain each strategy to guarantee the same MSE performance.
More precisely, given some $\tau > 0$, we assume that the strategy $S_N$ with
$\vtheta\in\prl 0,\infty\prr^N$ satisfies
\begin{align*}
	\tau = \MSEo \prl \vtheta \prr = \frac{1}{\vtheta^T \vOne},
\end{align*}
or equivalently, $ \tinv = \vtheta^T\vOne$, for any positive integer $N$.
We also define the total cost incurred by this strategy $S_N$, which achieves
$\MSEo \prl \vtheta \prr = \tau$, as
\begin{align*}
	\costt \prl N \prr \defi \sum_{i=1}^N C \prl \theta_{i} \prr+ D \prl N\prr. \nn
\end{align*}

\subsection{Convex Cost Functions}
We study the cost-performance tradeoff for the class of repetition-based strategies
under a convex cost function.
This case corresponds to a {\it law of diminishing returns} between cost and fidelity,
which may drive the dispersion of cost across several less reliable computational units
with smaller individual costs.
We show that there are two main cases, where in the first case some strategy $S_N$ with $N>1$ may incur the minimum total cost achievable by the repetition-based strategies while achieving the same MSE, whereas in the second case the strategy $S_1$ is optimal in terms of cost-performance tradeoff,
i.e., no repetition or fusion is required.

Consider a uniform fidelity distribution across several unreliable computational outcomes, given by
\begin{align}
	\theta_{i} \triangleq  \frac{1}{\tau N},\;\;i=1,\ldots,N,
	\label{eq:uniform}
\end{align}
which implies that the constraint $\MSEo \prl \vtheta \prr = \tau$ is satisfied.
In fact, the following lemma shows that the optimal fidelity distribution satisfying the MSE constraint
in terms of minimizing the total cost is in fact uniform.
\begin{lemma}
\label{lemma:uniform}
For any $\tau>0$, the uniform fidelity distribution given by \eqref{eq:uniform} is the unique solution
to the optimization problem:
\begin{align*}
	\min_{\vtheta \in \mathbb{R}^N_+} \sum_{i=1}^{N} C\prl \theta_{i}\prr
\end{align*}
subject to $\vtheta^T\vOne = \tinv$
when the cost function $C\prl \theta \prr$ is convex.
\end{lemma}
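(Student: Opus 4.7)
The plan is to reduce the statement to a one-line application of Jensen's inequality, and then handle uniqueness via the first-order (Lagrange) conditions. Because the constraint $\vtheta^T\vOne = \tinv$ is linear and symmetric in the $\theta_i$'s, and the objective $\sum_i C(\theta_i)$ is a sum of copies of a convex function, the structure is exactly the one for which Jensen gives the answer immediately.

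First I would write
\begin{align*}
\frac{1}{N}\sum_{i=1}^N C(\theta_i) \;\geq\; C\!\left(\frac{1}{N}\sum_{i=1}^N \theta_i\right) \;=\; C\!\left(\frac{1}{\tau N}\right),
\end{align*}
which uses only convexity of $C$ together with the constraint $\sum_i \theta_i = \tinv$. Multiplying by $N$ shows that any feasible $\vtheta$ satisfies $\sum_i C(\theta_i) \geq N\,C(1/(\tau N))$, and the uniform choice \eqref{eq:uniform} achieves this lower bound, proving that it is optimal.

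For the uniqueness claim I would pass to the Lagrangian: since $C = \cmin + G$ with $G$ twice differentiable and increasing, any interior minimizer must satisfy $C'(\theta_i) = \lambda$ for all $i$, i.e., all $\theta_i$ lie in a common level set of $C'$. Under strict convexity of $C$ (i.e., $G'' > 0$, which is the natural reading of ``convex'' in the paper since otherwise the problem degenerates to an affine $G$, a boundary case that is linear and thus not really in the convex regime of interest), $C'$ is strictly increasing, so the level set is a singleton and all $\theta_i$ coincide, forcing $\theta_i = 1/(\tau N)$. The boundary of the feasible region can be excluded because $G'(0^+) < \infty$ would still allow comparing with a strictly interior perturbation, and convexity rules out improvement by sending some $\theta_i \to 0$ while boosting others.

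The main obstacle, if any, is the uniqueness part under merely convex (as opposed to strictly convex) $C$: the equality case in Jensen then allows multiple optimizers and the statement as written would fail. I would therefore flag that the uniqueness assertion implicitly relies on strict convexity of $G$, which is consistent with the subsequent ``law of diminishing returns'' interpretation and with $G$ being twice differentiable with $G'' > 0$. Modulo this reading, the proof is essentially one line of Jensen plus one line of Lagrange; no further structural argument is needed.
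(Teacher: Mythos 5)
Your proof is correct and takes a partly different route from the paper. The paper's proof is pure Lagrange: it forms the Lagrangian, derives the stationarity conditions $C'(\theta_j)=\lambda$ for all $j$, and then argues that since $C'$ is nondecreasing it is ``invertible,'' forcing all $\theta_j$ to coincide. You instead establish optimality of the uniform point in one line via Jensen's inequality, which is more elementary (it needs no differentiability of $G$) and, unlike first-order stationarity alone, directly certifies that the uniform point is a global minimizer rather than merely a critical point; you then reserve the Lagrange argument for uniqueness only. Your flagged caveat is substantive and applies equally to the paper's own proof: if $C$ is merely convex (e.g., affine $G$), then $C'$ is nondecreasing but not injective, the equality case of Jensen is not rigid, and every feasible $\vtheta$ is optimal, so the uniqueness claim as literally stated fails. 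The paper's step ``nondecreasing, hence invertible'' silently assumes strict monotonicity of $C'$, i.e., strict convexity of $G$ --- exactly the hypothesis you identify as necessary. So your write-up is not only a valid alternative but slightly more careful on both the sufficiency side (Jensen) and the uniqueness side (explicitly isolating strict convexity).
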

\begin{proof}
The proof is given in Appendix~\ref{app:uniform}.
\end{proof}

Hence, we only consider the case where the strategy $S_N$, for each positive integer $N$,
uses the fidelities in \eqref{eq:uniform}.
The total cost incurred by this strategy $S_N$ is:
\begin{align}
	\costt \prl N \prr
	&= \sum_{i=1}^N C \prl \frac{1}{\tau N}\prr + D\prl N \prr \nn \\
	&= N G \prl \frac{1}{\tau N} \prr + N\cmin + D\prl N \prr. \label{eq:ConvexCost}
\end{align}

To investigate the behavior of the total cost, we define its continuous relaxation as
\begin{align}
	\costt &: \sql 1, \infty \prr \rightarrow \prl 0,\infty \prr \nn\\
	\costt \prl a \prr
	&\defi a G \prl \frac{1}{\tau a} \prr + a\cmin + D\prl a \prr,\nn
\end{align}
where $D\prl a \prr$ is a twice differentiable continuous relaxation of the fusion cost function
$D\prl N \prr$.
We first demonstrate that $\costt \prl a \prr$ is a convex function in $a$.
\begin{lemma}
\label{lemma:convexTotal}
The total cost function $\costt \prl a \prr$ is convex in $a$.
\end{lemma}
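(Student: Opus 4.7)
The plan is to decompose $\costt(a)$ into three terms, namely $\phi(a) \defi a\, G\prl 1/(\tau a)\prr$, the linear term $a\,\cmin$, and the fusion cost $D(a)$, and to argue convexity of each separately. The linear term $a\,\cmin$ is clearly convex (indeed affine), and $D(a)$ is convex by the assumption stated at the end of Section~\ref{sec:problem}. Thus the only nontrivial work is to show that $\phi(a)$ is convex on $[1,\infty)$. Since convexity is preserved under sums, the result then follows immediately.

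To handle $\phi(a) = a\,G\prl 1/(\tau a)\prr$, my first preference is to recognize it as a perspective transform of $G$: writing $t = a$ and $x = 1/\tau$, we have $\phi(a) = t\, G(x/t)$, which is the perspective of the convex function $G$ evaluated at a fixed $x$ and variable $t$. Since the perspective of a convex function is jointly convex in $(x,t)$ on $\mathbb{R}\times(0,\infty)$, restricting the second argument to a ray $t=a>0$ with $x$ fixed yields a convex function of $a$. Convexity of $G$ itself follows from the hypothesis that $C(\theta) = \cmin + G(\theta)$ is convex, since $\cmin$ is a constant.

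If a self-contained argument is preferred over invoking perspective machinery, I would instead compute two derivatives directly. Setting $\theta \defi 1/(\tau a)$, one obtains
\begin{align*}
\phi'(a) &= G(\theta) - \theta\,G'(\theta),\\
\phi''(a) &= \frac{\theta\,G''(\theta)}{\tau a^{2}}.
\end{align*}
Because $G$ is twice differentiable and convex, $G''(\theta)\geq 0$; combined with $\theta>0$ and $a>0$, this gives $\phi''(a)\geq 0$, hence $\phi$ is convex on $[1,\infty)$.

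I do not anticipate a genuine obstacle. The only subtle point is to use the right hypothesis at the right moment: the convexity of $\phi$ relies specifically on the convexity of the \emph{incremental} cost $G$ (equivalently $C$), and the twice-differentiability assumed earlier is what makes the second-derivative argument admissible. Once $\phi$, $a\,\cmin$, and $D$ are each shown convex, summing yields convexity of $\costt(a)$.
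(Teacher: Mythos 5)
Your proposal is correct, and your primary route is genuinely different from the paper's. The paper proves the lemma by brute force: it differentiates $\costt(a)$ twice as a whole and observes that the second derivative collapses to $\frac{1}{\tau^2 a^3}G''\prl \frac{1}{\tau a}\prr + D''(a) \geq 0$, which relies on the assumed twice-differentiability of both $G$ and $D$. Your decomposition into $a\,G\prl 1/(\tau a)\prr$, $a\,\cmin$, and $D(a)$, with the first term recognized as the perspective of the convex function $G$ evaluated at fixed $x=1/\tau$ and variable $t=a$, reaches the same conclusion without any smoothness hypothesis on $G$ or $D$ beyond convexity; restricting a jointly convex function to the line $\{(1/\tau, a): a>0\}$ indeed preserves convexity. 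That generality is what the perspective argument buys, at the cost of invoking a standard but nontrivial fact from convex analysis. Your fallback second-derivative computation is essentially the paper's proof term by term: $\phi''(a) = \theta G''(\theta)/(\tau a^2)$ with $\theta = 1/(\tau a)$ equals the paper's $\frac{1}{\tau^2 a^3}G''\prl\frac{1}{\tau a}\prr$, so the two computations agree. Either version of your argument is a complete proof.
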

\begin{proof}
The proof is provided in Appendix~\ref{app:convexTotal}.
\end{proof}

Convexity of $\costt \prl a \prr$ implies that it has a unique minimizer on any given compact subset of
its domain $\sql 1, \infty\prr$.
In particular, note that
$\costt \prl 1 \prr = G \prl \tinv \prr + \cmin$, and
$ \costt \prl a \prr \rightarrow \infty$ as $a\rightarrow\infty$.
Therefore, the total cost function $\costt \prl a \prr$ has a unique
and finite minimizer $ \aot \in \sql 1,\infty \prr$.
Also, there exists a corresponding unique optimal repetition-based strategy, which we denote as the
strategy $S_{\Not}$ where
\begin{align}
	\Not = \argmin\limits_{N\in\crl \floor{\aot},\ceil{\aot}\crr} \costt \prl N \prr.
	\label{eq:optimalN}
\end{align}
is a finite positive integer (a function of $\tau$), that minimizes the total incurred cost while achieving the desired MSE of $\tau>0$.

We next characterize conditions under which the optimal repetition-based strategy
either uses a single but more reliable computational unit, that is, $\Not=1$,
or distributes the cost across several unreliable computational units and fuses their outcomes,
that is, $\Not>1$.
In the latter case, we implicitly derive the optimal strategy as a function of the desired MSE level
$\tau$, the baseline cost $\cmin$, and the fusion cost function $D\prl \cdot \prr$.
The next theorem characterizes these cases in terms of the first derivative of the
fusion cost and the baseline cost.

\begin{theorem}
\label{thm:ultimateCondition}
For any given $\tau>0$, the minimizer of $\costt \prl a\prr$ satisfies $\aot>1$ if and only if
\begin{align*}
	\cmin + D^{\prime} \prl 1 \prr <  \Vt
\end{align*}
where
\begin{align}
	\Vt \defi \tinv G^{\prime} \prl \tinv \prr - G \prl \tinv \prr. \label{eq:V}
\end{align}
\end{theorem}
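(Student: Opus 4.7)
The plan is to reduce the question to a first-order condition at the boundary, using the convexity established in Lemma~\ref{lemma:convexTotal}. Because $\costt \prl a \prr$ is convex on its domain $\sql 1,\infty \prr$ and tends to $\infty$ as $a\to\infty$, its minimizer lies strictly in the interior (i.e.\ $\aot > 1$) if and only if the right derivative at the boundary is negative: $\costt^{\prime} \prl 1\prr < 0$. So the whole proof is just computing $\costt^{\prime} \prl 1\prr$ and matching it against the claimed quantity $\Vt$.

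I would first differentiate
\begin{align*}
	\costt \prl a \prr = a\, G \prl \frac{1}{\tau a}\prr + a\, \cmin + D \prl a \prr
\end{align*}
by the product and chain rules, giving
\begin{align*}
	\costt^{\prime} \prl a \prr = G \prl \frac{1}{\tau a}\prr - \frac{1}{\tau a}\, G^{\prime}\prl \frac{1}{\tau a}\prr + \cmin + D^{\prime} \prl a \prr.
\end{align*}
Evaluating at $a = 1$ yields
\begin{align*}
	\costt^{\prime} \prl 1\prr = G \prl \tinv \prr - \tinv G^{\prime} \prl \tinv \prr + \cmin + D^{\prime} \prl 1\prr = -\Vt + \cmin + D^{\prime} \prl 1\prr,
\end{align*}
using the definition of $\Vt$ in \eqref{eq:V}.

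Finally, I would invoke convexity of $\costt$ to conclude: since $\costt^{\prime}$ is nondecreasing on $\sql 1,\infty\prr$, the condition $\aot > 1$ is equivalent to $\costt^{\prime} \prl 1\prr < 0$, which by the computation above is equivalent to $\cmin + D^{\prime} \prl 1\prr < \Vt$. The converse direction (if the inequality fails, then $\costt^{\prime} \prl 1\prr \geq 0$, and by convexity $\costt$ is nondecreasing on $\sql 1,\infty\prr$, forcing $\aot = 1$) follows in the same way. There is essentially no obstacle here beyond the careful derivative calculation; the substantive content is already packed into Lemma~\ref{lemma:convexTotal}, so this theorem is really a clean restatement of the first-order optimality condition at the boundary, and the main thing to be careful about is the sign in the chain rule when differentiating $G\prl 1/(\tau a)\prr$ with respect to $a$.
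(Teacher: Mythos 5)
Your proposal is correct and follows essentially the same route as the paper: compute the first derivative of $\costt\prl a \prr$, evaluate it at the boundary $a=1$, and use the convexity from Lemma~\ref{lemma:convexTotal} (together with $\costt\prl a\prr \to \infty$) to conclude that $\aot>1$ holds exactly when that boundary derivative is negative, which matches the condition $\cmin + D^{\prime}\prl 1\prr < \Vt$. The derivative calculation and the sign in the chain rule are handled exactly as in the paper's argument, so there is nothing to add.
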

\begin{proof}
We define
$
	\kappat \prl a \prr \defi \partial \costt\prl a\prr / \partial a,
$
and observe that from Lemma~\ref{lemma:convexTotal}, $\kappat \prl a \prr$ is nondecreasing
and continuous in $a$ since $\costt\prl a\prr$ is a twice differentiable and convex function of $a$.
Hence, whenever $\kappat \prl 1\prr \geq 0$, we have $\kappat \prl a\prr \geq 0$ for any $a>1$.
It implies that $\costt\prl a\prr$ is a nondecreasing function of $a$ on $\sql 1,\infty\prr$,
and minimized at $\aot =1$.
When $\kappat \prl 1\prr<0$, $\costt\prl a\prr$ is minimized
at some finite $\aot>1$,
since
$
 \costt\prl a \prr \rightarrow \infty
$
as $a\rightarrow \infty$. 
The proof follows by noting that
\begin{align*}
	\kappat \prl 1\prr = G \prl \tinv \prr - \tinv G^{\prime} \prl \tinv \prr + \cmin + D^{\prime} \prl a \prr
	< 0
\end{align*}
if and only if $\cmin + D^{\prime} \prl 1\prr < \Vt$,
where $\Vt$ is defined in \eqref{eq:V}.
\end{proof}

Based on these results, we can characterize the optimal repetition-based strategy.
If $\cmin + D^{\prime} \prl 1 \prr \geq  \Vt$, then $\Not = 1$ since $\aot=1$.
Otherwise, we get $\aot>1$, which is in this case implicitly given by
\begin{align}
	\frac{\partial \costt\prl a\prr}{\partial a}\Bigg\rvert_{a=\aot}
	&= G\prl \frac{1}{\tau \aot }\prr
	- \frac{1}{\tau \aot } G^{\prime} \prl \frac{1}{\tau \aot}\prr
	+ \cmin + D^{\prime} \prl \aot \prr  \nn\\
	&= 0. \label{eq:implicit}
\end{align}
If $1<\aot<2$, then we may get $\Not=1$ or $\Not=2$, based on \eqref{eq:optimalN}.
When $\aot\geq 2$, we get $\Not>1$.

We finally consider the optimal repetition-based strategy
as the target MSE $\tau$ changes.
In the following lemma, we investigate the function $\Vt$
defined in \eqref{eq:V} as $\tau$ changes.

\begin{lemma}
\label{lemma:changingMSE}
$\Vt$ is nonnegative and nonincreasing on $(0,\infty)$,
and in particular, we have $\lim_{\tau\rightarrow \infty} \Vt = 0$, and
\begin{align}
	L \defi \lim_{\tau\rightarrow 0}  \Vt > 0, \label{eq:L}
\end{align}
if $\Vt$ is bounded as $\tau\rightarrow 0$,
or else, the limit does not exist.
\end{lemma}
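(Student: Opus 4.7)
The plan is to reduce everything to the auxiliary function $h(s) := s\,G'(s) - G(s)$ via the substitution $s = \tau^{-1}$, so that $V(\tau) = h(\tau^{-1})$, and then read off each assertion from standard convexity properties of $G$.

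First, I would establish nonnegativity. Since $G$ is convex and differentiable on $[0,\infty)$ with $G(0) = 0$, applying the supporting-hyperplane inequality at $s$ evaluated at $0$ gives
\begin{align*}
  G(0) \;\geq\; G(s) + G'(s)(0 - s) \;=\; G(s) - s\,G'(s),
\end{align*}
which rearranges to $h(s) \geq -G(0) = 0$. Thus $V(\tau) = h(\tau^{-1}) \geq 0$ for every $\tau > 0$.

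Next, monotonicity. Because $G$ is twice differentiable, so is $h$, and a direct computation yields $h'(s) = G'(s) + s\,G''(s) - G'(s) = s\,G''(s) \geq 0$ by convexity of $G$. Hence $h$ is nondecreasing on $(0,\infty)$. Composing with $s = \tau^{-1}$, which is strictly decreasing in $\tau$, shows that $V(\tau)$ is nonincreasing in $\tau$.

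For the limit at $\tau \to \infty$, note that $s = \tau^{-1} \to 0$. Using $G(0) = 0$ and the finite value $G'(0)$ supplied by the twice-differentiability assumption, both $G(s) \to 0$ and $s\,G'(s) \to 0$, giving $\lim_{\tau\to\infty} V(\tau) = 0$. For the limit at $\tau \to 0$, we have $s \to \infty$; monotonicity of $h$ implies $\lim_{s\to\infty} h(s)$ exists in $[0,\infty]$. If this supremum is finite, call it $L$, then $V(\tau) \to L$. To see $L > 0$, it suffices to exhibit any $s_0 > 0$ with $h(s_0) > 0$: since $G$ is strictly convex on the relevant range (the degenerate linear case is handled in the separate linear-cost subsection), there exists $s_0$ with $G''(s_0) > 0$, and integrating $h'(s) = sG''(s)$ from $0$ to $s_0$ yields $h(s_0) > h(0) = 0$, hence $L \geq h(s_0) > 0$. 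If instead $h$ is unbounded, then $V(\tau) \to \infty$ as $\tau \to 0$, so the limit fails to exist as a finite number. The main subtlety is the last step—ruling out $L = 0$—which is exactly where the (strict) convexity of the cost function in this subsection is used; the remaining claims are purely calculus once one passes to the variable $s$.
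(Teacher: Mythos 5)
Your proof is correct and follows essentially the same calculus as the paper's, just phrased in the variable $s=\tau^{-1}$: your computation $h'(s)=sG''(s)\geq 0$ is exactly the paper's $V'(\tau)=-\tau^{-3}G''(\tau^{-1})\leq 0$ after the chain rule, and the limit as $\tau\to\infty$ is handled identically via $G(0)=0$ and finiteness of $G'(0)$. Two points of genuine difference are worth noting. First, you prove nonnegativity directly from the first-order convexity inequality $G(0)\geq G(s)-sG'(s)$, whereas the paper deduces it indirectly from monotonicity together with the limit $0$ at $\tau\to\infty$; both are valid, and yours is slightly more self-contained. Second, and more substantively, you address why $L>0$ strictly, which the paper's proof silently omits: the paper only shows that $V$ either converges to a finite limit or is unbounded, and never rules out $L=0$. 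Your argument---pick $s_0$ with $G''(s_0)>0$ and integrate $h'(s)=sG''(s)$ to get $h(s_0)>0$, hence $L\geq h(s_0)>0$ by monotonicity---is correct, but be explicit that it requires $G''\not\equiv 0$ on $(0,\infty)$ (equivalently, $G$ is not affine); under the paper's stated hypotheses, which assume only convexity, an affine $G$ gives $V\equiv 0$ and $L=0$, so the strict inequality in the lemma genuinely needs the non-degeneracy assumption you invoke by deferring the linear case to its own subsection. In short, your write-up is a valid proof that also patches a small gap in the original.
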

\begin{proof}
We present the proof in Appendix~\ref{app:changingMSE}.
\end{proof}

It may appear that from \eqref{eq:optimalN} and \eqref{eq:implicit},
as the target MSE $\tau$ decreases, the optimal repetition-based strategy may need
to fuse more units, i.e., $\Not$ may increase.
More rigorously, we next characterize the behavior of the minimizer $\aot$
of the total cost $\costt\prl a \prr$ as the target MSE $\tau$ changes.

\begin{theorem}
\label{thm:changingTau}
If the limit in \eqref{eq:L} exists, and $L \leq \cmin + D^{\prime} \prl 1 \prr $,
then $\aot = 1$ for all $\tau>0$.
If, on the other hand, the limit does not exist, or it exists and $L > \cmin + D^{\prime} \prl 1 \prr $, we define
\begin{align}
	T \defi \inf V^{-1}\prl \cmin + D^{\prime} \prl 1 \prr \prr > 0, \nn
\end{align}
where $V^{-1}\prl x \prr$ is the inverse image of a point $x$ under the function $V$ for any $x>0$.
Then we get $\aot =1$ whenever $\tau \geq T$, and $\aot > 1$ whenever  $0< \tau < T$.
\end{theorem}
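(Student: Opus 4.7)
The plan is to reduce the statement to a comparison between $\Vt$ and the constant $c \defi \cmin + D^{\prime}\prl 1 \prr$, leveraging Theorem~\ref{thm:ultimateCondition} and the structural properties of $V$ established in Lemma~\ref{lemma:changingMSE}. By Theorem~\ref{thm:ultimateCondition}, the question ``$\aot > 1$?'' is equivalent to the question ``$\Vt > c$?'' Lemma~\ref{lemma:changingMSE} tells us that $V$ is nonnegative and nonincreasing on $\prl 0, \infty\prr$ with $\Vt \to 0$ as $\tau \to \infty$; additionally, since $G$ is twice continuously differentiable, the formula $\Vt = \tinv G^{\prime}\prl \tinv \prr - G\prl \tinv \prr$ shows $V$ is continuous on $\prl 0, \infty\prr$.

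For the first case, suppose the limit $L$ exists and $L \leq c$. Since $V$ is nonincreasing, $\Vt \leq \lim_{\tau^{\prime} \to 0^{+}} V\prl \tau^{\prime} \prr = L \leq c$ for every $\tau > 0$. Theorem~\ref{thm:ultimateCondition} then gives $\aot = 1$ for all $\tau > 0$, which is the first conclusion.

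For the second case, the limit $L$ either fails to exist---in which case $V$, being nonincreasing on $\prl 0,\infty\prr$, must diverge to $+\infty$ as $\tau \to 0^{+}$---or it exists and exceeds $c$. In either subcase, $\Vt > c$ for all sufficiently small $\tau > 0$, while $\Vt < c$ for all sufficiently large $\tau > 0$ because $\Vt \to 0$. The intermediate value theorem, applied to the continuous function $V$, then ensures that $V^{-1}\prl c \prr$ is nonempty; together with continuity and monotonicity, it is a closed interval (possibly a single point). Set $T \defi \inf V^{-1}\prl c \prr$. Then $T > 0$ because $\Vt > c$ for all sufficiently small $\tau$. For $\tau < T$, the infimum characterization combined with monotonicity of $V$ forces $\Vt > c$ strictly, so $\aot > 1$ by Theorem~\ref{thm:ultimateCondition}. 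For $\tau \geq T$, continuity yields $V\prl T \prr = c$ and monotonicity yields $\Vt \leq V\prl T \prr = c$, so $\aot = 1$.

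The principal obstacle is largely bookkeeping: noting that ``the limit does not exist'' forces divergence to $+\infty$ (since any monotone function has a limit in the extended reals), and handling the boundary at $\tau = T$, where one must invoke continuity of $V$ to conclude $V\prl T \prr = c$ exactly rather than only $V\prl T \prr \leq c$. Beyond that, the argument is a direct consequence of monotonicity and continuity of $V$ together with the dichotomy provided by Theorem~\ref{thm:ultimateCondition}.
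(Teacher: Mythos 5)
Your proof is correct and follows essentially the same route as the paper: reduce everything to the comparison $\Vt \lessgtr \cmin + D'(1)$ via Theorem~\ref{thm:ultimateCondition}, then use the monotonicity and limiting behavior of $V$ from Lemma~\ref{lemma:changingMSE} to locate the threshold $T$. You actually supply details the paper's own proof leaves implicit (nonemptiness of $V^{-1}(\cmin+D'(1))$ via the intermediate value theorem, and the boundary argument that $V(T)=\cmin+D'(1)$ by continuity), which only strengthens the argument.
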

\begin{proof}
Suppose the limit in \eqref{eq:L} exists, and $L \leq \cmin + D^{\prime} \prl 1 \prr$.
Then $\Vt \leq \cmin + D^{\prime} \prl 1 \prr$, and $\aot = 1$, for all $\tau>0$.

Suppose next the limit in \eqref{eq:L} either does not exist, or it exists and
$L > \cmin + D^{\prime} \prl 1 \prr$.
Since $V\prl \tau \prr$ is a monotone function, $V^{-1}\prl  \cmin + D^{\prime} \prl 1 \prr  \prr$
is
either a singleton or an interval.
Then for any $\tau \geq T$, we have $\Vt \leq \cmin + D^{\prime} \prl 1 \prr$,
which implies $\aot = 1$,
and when $0<\tau<T$, we have 
$
	\cmin + D^{\prime} \prl 1 \prr < \Vt,
$
which implies $\aot  > 1$.
\end{proof}


In this section, we investigated the cost-performance tradeoff for repetition-based strategies
under convex cost functions.
In particular, we characterized the optimal repetition-based strategy in terms of
the baseline cost, the behaviors of the incremental and fusion cost functions with different parameters,
for different values of the target MSE level $\tau$.
We next study the cost-performance tradeoff under linear cost functions.

\subsection{Linear Cost Functions}
We consider the optimal repetition-based strategy in terms of cost-efficiency
when the underlying cost function is linear, where we can express it as
\begin{align*}
	 C \prl\theta\prr = \cmin + \alpha \theta,\;\;\; \theta>0,
\end{align*}
where $\alpha>0$ is an application-dependent constant.
This case corresponds to a {\it law of proportional returns}.
We show that the strategy $S_1$ is the optimal
repetition-based strategy for any target MSE $\tau>0$.
There is no gain in repetition-based approaches in terms of cost-efficiency for linear cost functions.

\begin{theorem}
\label{thm:linear}
Suppose that the cost function $C\prl \theta \prr$ is linear, that is, $C \prl\theta\prr = \cmin + \alpha \theta$ for some $\alpha>0$.
Then the optimal repetition-based strategy in terms of minimizing the incurred
cost while achieving the same MSE is the strategy $S_1$.
\end{theorem}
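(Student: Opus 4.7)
The plan is to exploit the key feature of linearity, namely that the sum of the incremental costs depends only on $\vtheta^T\vOne$ and not on how the total fidelity is partitioned. Once this is observed, the dependence of $\cost_\tau(N)$ on $N$ reduces to two terms whose monotonicity is evident from the problem setup, so no optimization over the fidelity vector is really needed.

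Concretely, I would first fix any positive integer $N$ and any $\vtheta\in(0,\infty)^N$ satisfying the MSE constraint $\vtheta^T\vOne = \tinv$ from Lemma~\ref{lemma:compare}. Substituting the linear cost function $C(\theta)=\cmin+\alpha\theta$ into the definition of $\costt(N)$ gives
\begin{align*}
\costt(N) &= \sum_{i=1}^N \bigl(\cmin + \alpha\theta_i\bigr) + D(N) \\
&= N\cmin + \alpha\,\vtheta^T\vOne + D(N) \\
&= N\cmin + \alpha\tinv + D(N).
\end{align*}
The middle term $\alpha\tinv$ is a constant fixed by the target MSE and does not depend on $N$ or on the distribution of fidelities among units. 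Notice this is consistent with (indeed, a degenerate case of) Lemma~\ref{lemma:uniform}: any feasible $\vtheta$ is optimal in the linear case because the objective is flat in $\vtheta$ on the constraint simplex.

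Next I would argue that the remaining $N$-dependent terms, $N\cmin + D(N)$, are minimized at $N=1$. Since $\cmin\geq 0$ by assumption, the map $N\mapsto N\cmin$ is nondecreasing, and since $D:\mathbb{Z}_+\to\mathbb{R}_+$ is stipulated to be increasing, $N\mapsto D(N)$ is strictly increasing. Hence $\costt(N)>\costt(1)$ for every $N\geq 2$, and the strategy $S_1$ attains the minimum. The main obstacle is essentially vacuous here: the only subtle point is recognizing that the incremental cost $\sum_i\alpha\theta_i$ is constant along the feasible set, which makes any repetition strictly wasteful once the baseline and fusion overheads are accounted for.
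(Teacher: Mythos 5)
Your proposal is correct and follows essentially the same route as the paper's own proof: substitute the linear cost, observe that $\alpha\sum_i\theta_i=\alpha\tinv$ is fixed by the MSE constraint regardless of how fidelity is partitioned, and conclude that the remaining terms $N\cmin+D(N)$ are increasing in $N$. If anything, your version is slightly more careful, since you invoke the monotonicity of $D$ explicitly rather than implicitly taking $D(1)=0$ as the paper's final equality does.
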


\begin{proof}
Let $\tau>0$ be given.
The total cost of the strategy $S_N$, for any positive integer $N$, is given by
\begin{align*}
	\cost_{\tau} \prl N\prr
	= N\cmin +\alpha\sum_{i=1}^{N} \theta_{i} +  D\prl N\prr
	&= N\cmin +\alpha \tinv +  D\prl N\prr\nn\\
	&> \cmin + \alpha\tinv
	= \cost_{\tau} \prl 1\prr.
\end{align*}
This implies the cost incurred by the strategy $S_1$ is smaller than that of
the strategy $S_N$ for any $N>1$ and $\tau>0$.
\end{proof}

For proportional costs a single more reliable unit is always more
cost-efficient than a fusion of several less reliable units
in the sense that it incurs a smaller cost while achieving the same MSE.
We next analyze the concave cost function case.


\subsection{Concave Cost Functions}
We consider the cost-performance tradeoff of each strategy in the class of 
strategies when the cost function is concave.
This case corresponds to a {\it law of increasing returns}, 
as opposed to a law of diminishing returns.
That is, the incremental cost for performance decreases, making single, high-cost, high performance elements more attractive.
Before proving the main theorem of this section,
we present a lemma that proves that the concave incremental cost function
is sub-additive.
\begin{lemma}
\label{lemma:subadditive}
If a function $f$ with the domain $\sql 0,\infty\prr$ is concave, and $f(0)\geq 0$, then it is sub-additive, i.e., for any $x,y\geq 0$,
\begin{align*}
	f(x)+f(y) \geq f(x+y).
\end{align*}
\end{lemma}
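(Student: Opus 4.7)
The plan is to use the standard convex-combination trick that writes each of $x$ and $y$ as a convex combination of $x+y$ and $0$, apply concavity to each expression, and then add the two resulting inequalities.

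More concretely, first I would dispose of the degenerate case $x+y=0$ (i.e., $x=y=0$): then the claimed inequality reduces to $2f(0)\geq f(0)$, which holds because $f(0)\geq 0$. For the main case $x+y>0$, I would observe the identities
\begin{align*}
	x &= \frac{x}{x+y}\,(x+y) + \frac{y}{x+y}\cdot 0, \\
	y &= \frac{y}{x+y}\,(x+y) + \frac{x}{x+y}\cdot 0,
\end{align*}
which are honest convex combinations since the weights are nonnegative and sum to $1$. Applying concavity of $f$ to each then gives
\begin{align*}
	f(x) &\geq \frac{x}{x+y}\, f(x+y) + \frac{y}{x+y}\, f(0), \\
	f(y) &\geq \frac{y}{x+y}\, f(x+y) + \frac{x}{x+y}\, f(0).
\end{align*}

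Adding these two inequalities, the coefficients of $f(x+y)$ sum to $1$ and the coefficients of $f(0)$ also sum to $1$, so
\begin{align*}
	f(x) + f(y) \geq f(x+y) + f(0) \geq f(x+y),
\end{align*}
where the last step uses the assumption $f(0)\geq 0$. This yields the desired sub-additivity. I do not anticipate any real obstacle here; the only subtlety is remembering to handle the $x=y=0$ case and to use $f(0)\geq 0$ in the final step, which is precisely where that hypothesis enters.
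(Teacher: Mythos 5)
Your proof is correct and follows essentially the same route as the paper's: both write $x$ and $y$ as convex combinations of $x+y$ and $0$, apply concavity, sum, and invoke $f(0)\geq 0$ to discard the leftover $f(0)$ term. The only cosmetic differences are that you carry the $f(0)$ term to the final step rather than dropping it inside an auxiliary inequality $f(\lambda x)\geq \lambda f(x)$, and you explicitly treat the degenerate case $x+y=0$, which the paper omits.
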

\begin{proof}
We provide the proof in Appendix~\ref{app:subadditive}.
\end{proof}

The next theorem characterizes the optimal repetition-based strategy in terms of
minimizing the total incurred cost while achieving the same MSE performance
for a given $\tau>0$.

\begin{theorem}
\label{thm:concave}
Suppose that the cost function $C\prl \theta\prr$ is concave,
and each repetition-based strategy achieves the same MSE level $\tau>0$.
Then the strategy $S_1$ is always the optimal strategy in terms of incurring the smallest cost for any $\tau>0$.
\end{theorem}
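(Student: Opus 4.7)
The plan is to reduce the cost comparison to a direct application of Lemma~\ref{lemma:subadditive} on the incremental cost $G$, together with monotonicity of $D$ and nonnegativity of $\cmin$. Since $C(\theta) = \cmin + G(\theta)$, the total cost of a strategy $S_N$ with fidelities $\vtheta \in (0,\infty)^N$ satisfying $\vtheta^T \vOne = \tinv$ can be rewritten as
\begin{align*}
    \costt(N) = N\cmin + \sum_{i=1}^N G(\theta_i) + D(N).
\end{align*}
The goal is to show this is at least $\cmin + G(\tinv) + D(1) = \costt(1)$ for every $N \geq 1$.

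First I would argue that $G$ is concave on $[0,\infty)$: this follows because $C$ is concave by assumption and the $\cmin$ term is constant, so $G = C - \cmin$ is concave. Combined with the standing assumption $G(0) = 0$, Lemma~\ref{lemma:subadditive} applies and gives that $G$ is sub-additive on $[0,\infty)$. A simple induction on $N$ then yields
\begin{align*}
    \sum_{i=1}^N G(\theta_i) \geq G\prl \sum_{i=1}^N \theta_i \prr = G\prl \tinv \prr,
\end{align*}
where the last equality uses the fixed-MSE constraint $\vtheta^T\vOne = \tinv$.

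Next, since $\cmin \geq 0$ we have $N\cmin \geq \cmin$ for all $N \geq 1$, and since $D$ is increasing we have $D(N) \geq D(1)$. Summing these three bounds gives
\begin{align*}
    \costt(N) \geq \cmin + G\prl \tinv \prr + D(1) = \costt(1),
\end{align*}
which establishes that $S_1$ attains the minimum over the class of repetition-based strategies.

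There is no real obstacle here, beyond the bookkeeping of invoking the sub-additivity lemma, so the main thing to be careful about is to explicitly verify that $G$ inherits concavity from $C$ (so that Lemma~\ref{lemma:subadditive} is applicable), and to note that the inequality chain does not require differentiability of $D$ or $G$, only monotonicity of $D$ and $\cmin \geq 0$. In particular, the argument is independent of the target MSE level $\tau$, which matches the claim that $S_1$ is optimal \emph{for any} $\tau>0$.
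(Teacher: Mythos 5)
Your proof is correct and follows essentially the same route as the paper: the same decomposition of $\costt(N)$ into $N\cmin + \sum_i G(\theta_i) + D(N)$, the same invocation of Lemma~\ref{lemma:subadditive} to get $\sum_i G(\theta_i) \geq G(\tinv)$, and the same comparison with $\costt(1)$. If anything you are slightly more careful than the paper, which writes $\costt(1) = G(\tinv) + \cmin$ and thereby implicitly takes $D(1)=0$, whereas you retain the $D(1)$ term and use only monotonicity of $D$.
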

\begin{proof}
Let $\tau>0$ be given.
Then, for any positive integer $N$, the total cost incurred by the strategy $S_N$ is given by
\begin{align*}
	\cost_{\tau}\prl N \prr
	&= \sum_{i=1}^N C \prl \theta_{i}\prr + D \prl N\prr
	= \sum_{i=1}^N G \prl \theta_{i}\prr + N\cmin + D \prl N \prr.\nn
\end{align*}
We note that by Lemma~\ref{lemma:subadditive}, the incremental cost function
is sub-additive, since it is concave and $G \prl 0\prr \geq 0$, implying that
\begin{align}
	\sum_{i=1}^N G\prl \theta_{i}\prr 
	\geq G\prl \sum_{i=1}^N \theta_{i}\prr
	= G\prl \tinv \prr.
	\label{eq:concavefirst}
\end{align}
Note that the cost incurred by the strategy $S_1$ is given by
\begin{align*}
	\cost_{\tau}\prl 1 \prr = G\prl \tinv \prr + \cmin,
\end{align*}
implying $\cost_{\tau}\prl N \prr > \cost_{\tau}\prl 1 \prr$ for any $N>1$.
Hence, the strategy $S_1$ is the optimal strategy for any desired MSE.
\end{proof}

Strategy $S_1$, which is formed by exhausting all available cost for a single
computational unit, is more cost-efficient as compared to any strategy $S_N$ with $N>1$,
which allocates available cost across several less reliable computational units.

In this section, we considered the cost-performance tradeoff of repetition-based strategies
under convex, linear, and concave cost function classes.
We showed that under convex cost functions the optimal cost-performance tradeoff may be achieved either by the strategy $S_1$ or by some strategy $S_N$ with $N>1$ under certain conditions.
For linear and concave costs,
optimality is always achieved by strategy $S_1$ for any target MSE performance.
In the next section, we consider applications of our results into a number of
contexts.


\section{Applications}
\label{sec:app}
Here, we show how our cost-fidelity formulation and theoretical results are connected
to problems from different fields.

\subsection{Neuroscience}
We review a particular application of our framework in a theoretical neuroscience context.
We focus on two principal tasks of the brain where synapses play essential roles, namely, information storage and information processing.
Typical central synapses exhibit noisy behavior due, for instance, to probabilistic transmitter release.
The firing of the presynaptic neuron is inherently stochastic and occasionally fails to evoke an excitatory postsynaptic potential (EPSP).
In this sense, we can cast each noisy synapse as an unreliable computational unit,
contributing to the overall neural computation carried out by its efferent neuron.
We focus on two distinct cost-fidelity formulations, where we show
that experimental results \cite{VarshneySC2006, MishchenkoHSMHC2010} agree with our theoretical predictions.
We note that recall corresponds to a form of ``in-memory computing'' whereas processing corresponds to a form of ``in-sensor computing''.

\subsubsection{In-Memory Computing}
Revisiting \cite{VarshneySC2006}, we first consider an information-theoretic framework to study the information storage capacity of synapses under resource constraints, where memory is seen as a communication channel subject to several sources of noise.
Each synapse has a certain SNR, where increasing the SNR increases the information storage capacity in a logarithmic fashion.
However, this increase comes at a cost, namely, the synaptic volume.
Hence, from an information storage perspective, we cast capacity as the fidelity of a noisy synapse and the volume as the cost.
If we denote the information storage capacity of a synapse and its average volume by $C_{\mathrm{I}}$  and $V$, respectively, then taking Shannon's AWGN channel capacity formula \cite{Shannon1948} for concreteness:
\begin{align*}
     C_{\mathrm{I}} = \frac{1}{2} \ln\prl 1 + \frac{V}{V_N} \prr,
\end{align*}
where $V_N$ is the volume of a synapse with a unit SNR.
This relationship assumes the power law
$(V/V_N) = \prl A/A_N\prr^2$, which is supported by experimental measurements  \cite{VarshneySC2006},
where $A$ is the mean EPSP amplitude and $A_N$ is the noise amplitude.
We rewrite the volume as a function of capacity as
\begin{align*}
     V = V_N \prl e^{2C_{\mathrm{I}}} -1 \prr,
\end{align*}
and observe that this is an exponential cost function, a particular example of convex costs.
For exponential costs, fusion of several less reliable computational units may lead to better cost-efficiency than a single more reliable computational unit.
Therefore, our cost-fidelity framework applied to information recall under resource constraints recovers the principle that several small and noisy synapses should be present in brain regions performing storage and recall, rather than large and isolated synapses \cite{VarshneySC2006, BrunelHINB2004}.

Moreover, \cite{LaughlinRSA1998, Zador1998, ManwaniK2001, LevyB2002, Goldman2004} show that the noisiness of the synapses leads to efficient information transmission.
That is, transmitting the same information over several less reliable but metabolically cheaper synapses requires less energy, as compared to the case where the information is transmitted over a single, more reliable but metabolically more expensive synapse.
The idea that noise can facilitate information transmission is also present in neuronal networks.
In particular, the authors in \cite{LaughlinS2003} show that a neuron is a noise-limited device of restricted bandwidth, and an energy-efficient nervous system will split the information and transmit it over a large number of relatively noisy neurons of lower information capacity.

\begin{figure}[t]
	\centering
	\includegraphics[scale=0.47]{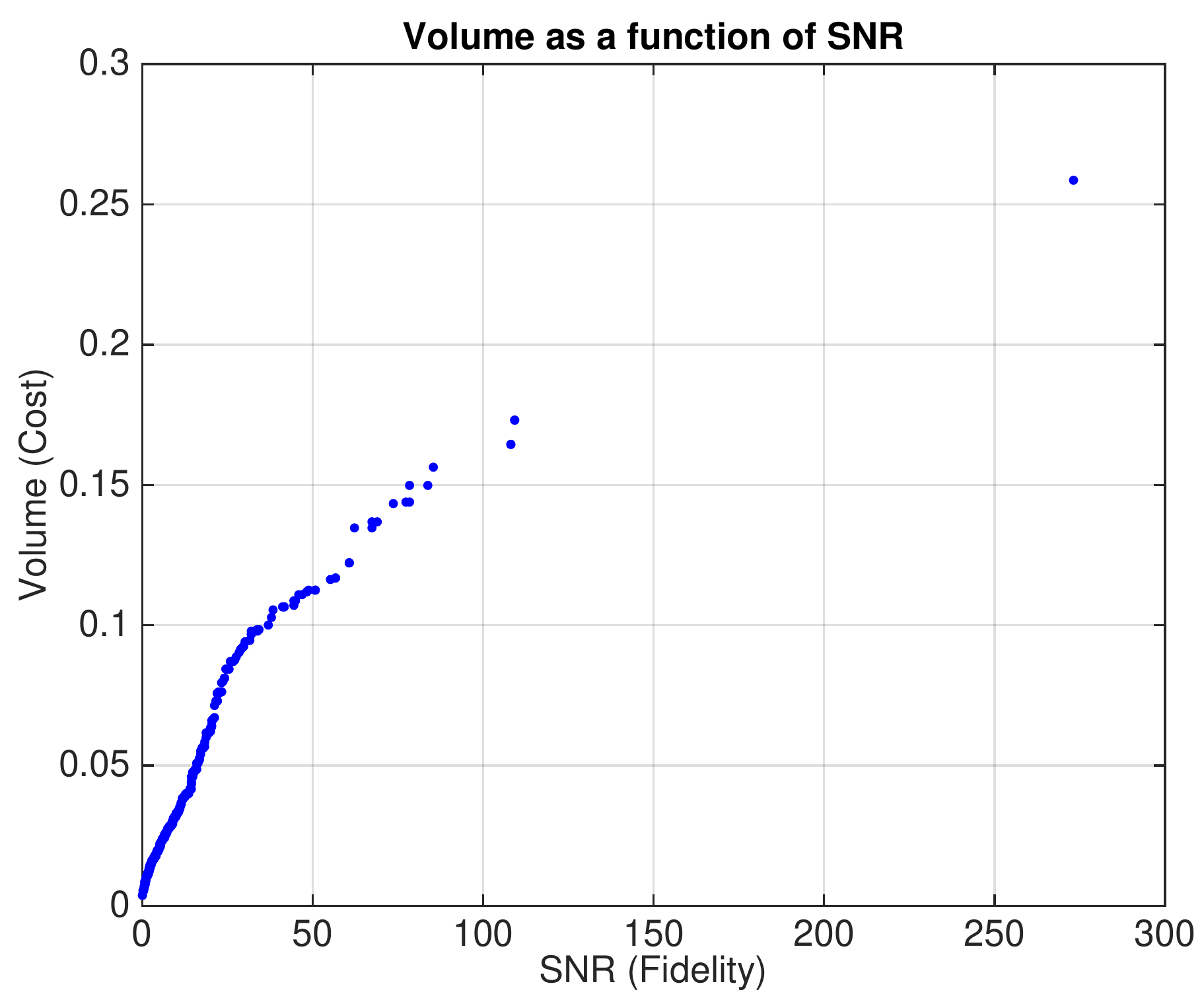}
	\caption{A data-driven cost (volume in $\mu m^3$) versus fidelity (SNR) function.}
	\label{fig:synapses}
\end{figure}

\subsubsection{In-Sensor Computing}
We next consider an information processing perspective,
and view the SNR of a synapse itself as its fidelity and the synaptic volume as the cost.
We adopt a data-driven approach using two different data sets.
This joining is necessary since joint electrophysiology and imaging experiments are technically difficult,
where electrophysiology experiments to measure voltages require live tissue
while electron micrograph imaging experiments to measure volumes require fixing and slicing the tissue
\cite{MishchenkoHSMHC2010}.

The first data set \cite{VarshneySC2006} includes EPSP measurements across 637 distinct
synapses over 43 trials for each synapse.
Based on these measurements, we generate an empirical distribution of the mean EPSP measurements of a synapse.
The second data set \cite{MishchenkoHSMHC2010} includes volume measurements across
357 synapses, which is used to compute a distribution of a synapse volume.

We first generate $T=500$ random variables $\crl Y_t \crr_{t=1}^T$
from the calculated volume distribution.
We next generate $T$ random variables from the calculated mean EPSP distribution,
and sort them assuming a monotonic relationship between the mean EPSP and the volume of synapses \cite{VarshneySC2006}.
From the sorted mean EPSP amplitudes, we compute the corresponding SNRs
$\crl X_t \crr_{t=1}^T$.
We plot the resulting pairs $\crl \prl X_t, Y_t \prr\crr_{t=1}^T$ in Figure~\ref{fig:synapses}.
This plot indicates that the cost function is approximately concave as a function of SNR.
More rigorously, we assess convexity using a nonparametric hypothesis test based on a simplex statistic,
a descriptive measure of curvature described in \cite{AbrevayaJ2005}.
When applied to this data, the test yields a $p$-value of $3.25\times10^{-4}$, which can be interpreted as a strong evidence in favor of the hypothesis that
the cost (volume) is a concave function of the SNR (fidelity).
This suggests that the brain may achieve cost-efficiency by using a single large and
reliable synapse, instead of several smaller and less reliable synapses, from an information processing perspective.

To compare this prediction with experimental findings, we focus on a particular synapse
called the {\it calyx of Held}, the largest synapse in the mammalian auditory central nervous system
that connects principal neurons within the auditory system \cite{Morest1968, SmithJCY1991, Hermann2008}.
The calyx of Held plays a crucial role in certain information processing
tasks of the brain.
For instance, the principal cells connected by the calyx of Held enable interaural level detection, 
a vital role in high frequency sound localization \cite{SpanglerWH1985, Tsuchitani1997}.
The signals derived from the calyx of Held generate large excitatory postsynaptic currents
with a short synaptic delay, where the transmission speed and fidelity of the calyx is 
very reliable in mature animals \cite{FutaiOMT2001}.

Hence, the calyx of Held may be regarded as a very reliable but costly synapse, as compared to the
ones performing information storage tasks, which are noisier and less costly in
terms of brain resources.
We observe that these experimental findings agree with our prediction that the cost-efficiency
results from employing a single reliable and costly synapse (calyx of Held),
instead of several less reliable and metabolically cheaper synapses, under a concave cost function.

\begin{figure}[t]
	\centering
	\includegraphics[scale=0.5]{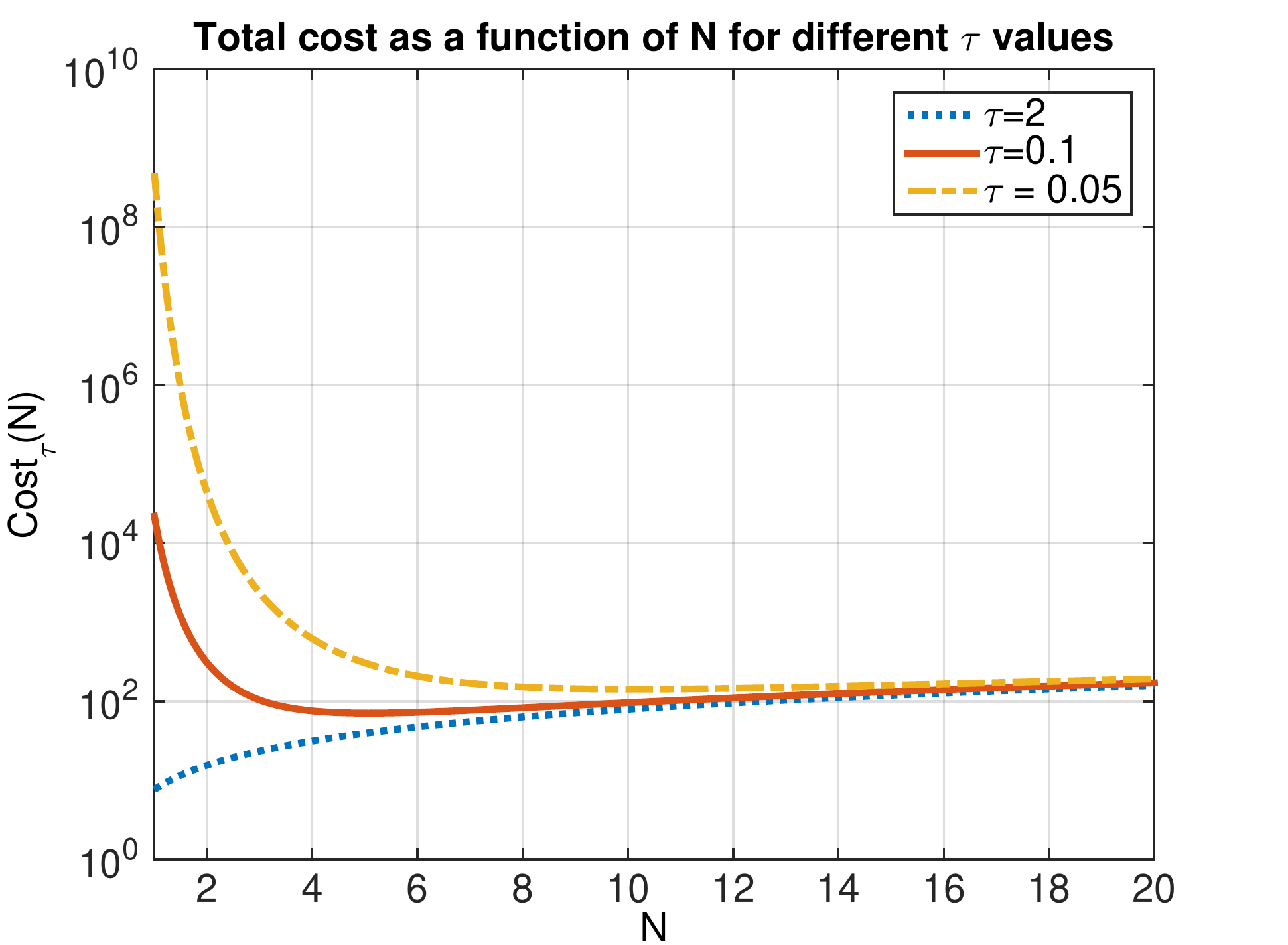}
	\caption{Total cost function \eqref{eq:exampleCost} with
	$\alpha=1,\beta=1,\gamma=1,\cmin=7$ for different values of the target MSE level $\tau$.}
	\label{fig:costVSN}
\end{figure}

\subsection{Circuits}
Next, let us consider signal processing systems implemented on unreliable circuit fabrics.
As CMOS technology scales beyond $\SI{10}{\nm}$, the operation of CMOS devices begins to suffer from static defects as well as dynamic operational non-determinism \cite{WongFSWW1999, WangS2003, ChoiSSC2007}.
Moreover, spintronics, which use electron spin for computing, exhibit an unreliable behavior,
where there is a tradeoff between reliability and energy consumption
\cite{ButlerMMVPRH2012, PatilMNYS2016}.
That is, probability of failure is smaller when more energy is used.
Hence, deeply scaled CMOS and spintronics based systems must operate
under computational errors.

In \cite{Neumann1956}, von Neumann studied noise in circuits and showed that even when circuit
components are unreliable, reliable computations can be performed by using repetition-based schemes.
Repeated computations followed by a majority vote have also been used extensively in error-tolerant circuit design \cite{AbdallahS2013, HanLL2014}.
Also, Hadjicostis \cite{Hadjicostis1999} investigated redundancy-based approaches
to build fault-tolerant dynamical systems out of cheap but unreliable components.

Moreover, a statistical error compensation technique called Algorithmic Noise Tolerance (ANT)
has been studied in \cite{HegdeS2001, KimS2013}.
ANT compensates for errors in computation in a statistical manner
by fusing outcomes of several unreliable computational branches that operate at different points
along energy-reliability tradeoffs.
The ANT framework can also be cast as a CEO problem in multiterminal source coding \cite{SeoV2016}.

Stochastic behavior in circuit fabrics may arise when computation is
embedded into either memory, which leads to in-memory computing \cite{KangKSEC2014},
or sensing, which leads to in-sensor computing \cite{HuRSSSWV2012},
to achieve cost-efficiency \cite{Shanbhag2016}.
Note that in-memory computing and in-sensor computing may lead to fundamentally different
cost-performance tradeoffs.
In particular, we demonstrate that the difference between in-memory computing and in-sensor computing may be modeled through our framework by using different cost-fidelity function classes.

\subsubsection{Example case}
Here, we present an application of the results of this section into spintronics.
In particular, exponential cost has been shown to approximately model the functional dependence between energy and reliability for a typical spin device \cite{PatilMNYS2016}.
Consider the exponential cost
\begin{align}
	C\prl \theta \prr = \cmin + \alpha \prl e^{\beta\theta} - 1\prr,\;\; \theta>0,
	\label{eq:exponentialExample}
\end{align}
for some $\alpha,\beta>0$.
Moreover, for illustration purposes, we assume that the fusion cost function is
$
	D\prl N \prr =\gamma \prl N- 1\prr,
$
for $N\geq 1$ and $\gamma>0$.
Then the total cost function is given by
\begin{align}
	\costt \prl N \prr
	&= \alpha N \prl e^{ \frac{\beta}{\tau N}} - 1 \prr + N\prl \cmin + \gamma \prr - \gamma,
	\label{eq:exampleCost}
\end{align}
for any positive integer $N$.
In Fig.~\ref{fig:costVSN}, we plot this total cost function with parameters $\alpha=1,\beta=1,\gamma=1,\cmin=7$ for different values of the target MSE $\tau>0$.
We observe that Fig.~\ref{fig:costVSN} illustrates how $\Not$ increases as $\tau$ decreases, as discussed in this section.
In particular, we note that $\Not=1,6,13$ for $\tau=2,0.1,0.05$, respectively.

Finally, the total cost function \eqref{eq:exampleCost} yields
\begin{align}
	V\prl \tau \prr
	&= \alpha \exp\prl \beta \tinv \prr \prl \beta\tinv - 1 \prr + \alpha,
	\label{eq:exampleV}
\end{align}
implying $V\prl \tau \prr \rightarrow \infty$ as $\tau \rightarrow 0$.
Hence there exists a threshold
\begin{align*}
	T = V^{-1} \prl \cmin + \gamma \prr > 0
\end{align*}
such that $\aot=1$ when $\tau \geq T$, and $\aot>1$ when $\tau < T$.
These cases are illustrated in Fig.~\ref{fig:Vtau} for $\cmin=7,\gamma=1$.

\begin{figure}[t]
	\centering
	\includegraphics[scale=0.5]{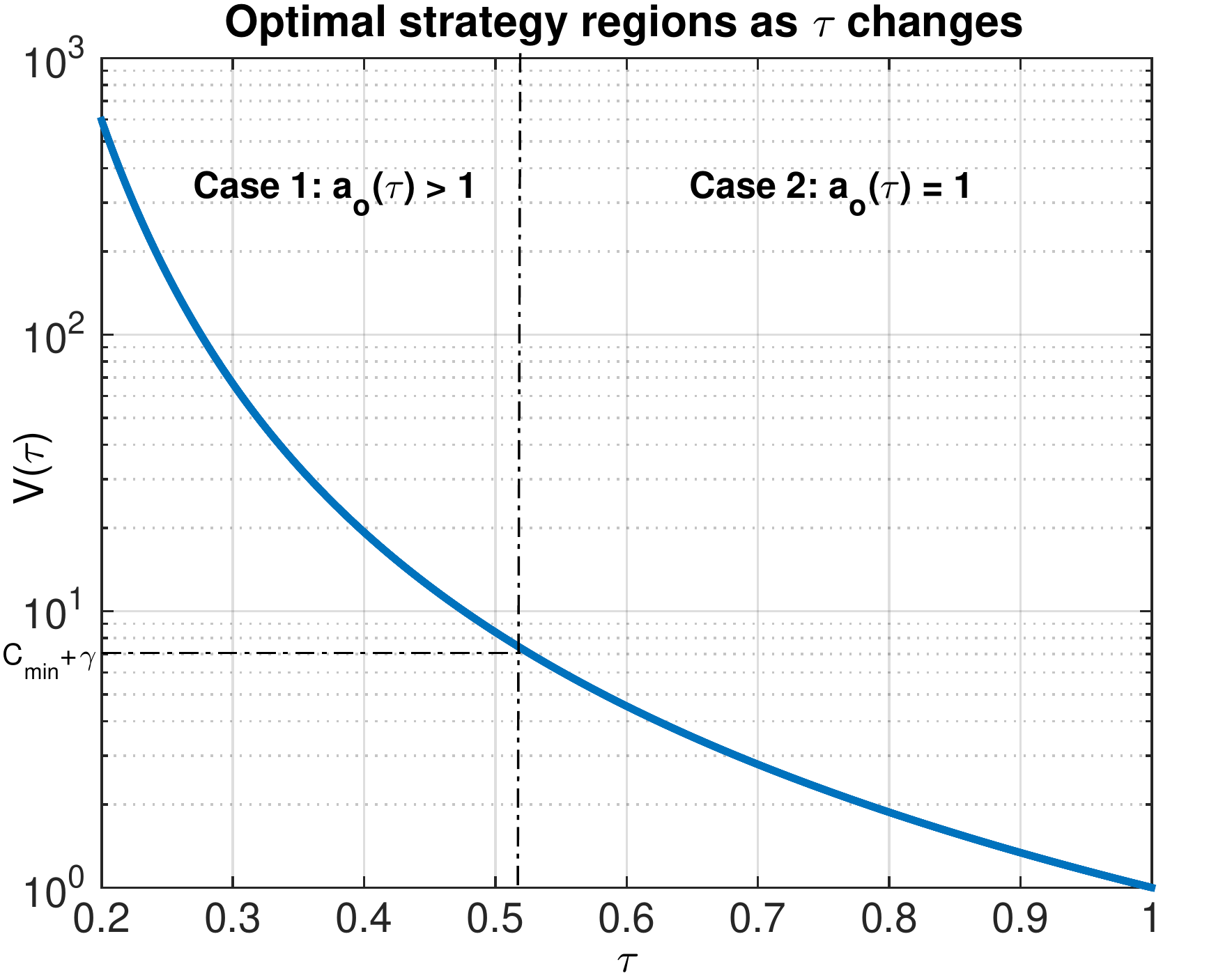}
	\caption{The function $V\prl \tau \prr$ \eqref{eq:exampleV} to illustrate the optimal strategy regions.}
	\label{fig:Vtau}
\end{figure}

\subsection{Crowdsourcing}
Crowdsourcing assigns a task to a large number of less expensive but unreliable workers,
instead of a small number of more expensive and reliable experts.
Monetary payment to incentivize workers has been shown to
affect the quality and the quantity of work in such scenarios \cite{JianhanPV2016}.
Recently, motivated by reliability issues of crowdsourced workers and limited budgets,
several researchers have pursued the limits of achievable performance
from
estimation-theoretic \cite{JianhanPV2016},
information-theoretic \cite{LahoutiH2016},
optimization \cite{KargerOS2014, KhetanO2016},
and empirical \cite{HoSSV2015}
perspectives.

The authors of \cite{HoSSV2015} studied the relation between monetary incentives and work quality in a knowledge task.
More precisely, they performed an experiment on 451 unique workers on Amazon Mechanical Turk, and investigated the effect of bonus payments on the work quality in the task of proofreading an article.
They measured the quality by the number of typographical errors found in a given article.
In this scenario, each worker is paid a base salary (minimum cost),
and an additional bonus (incremental cost), which is shown to yield an improvement in the work quality.
In this sense, the bonus payment, i.e., the incremental cost, can be viewed as a function of
the number of errors found.
In particular, experiments in \cite{HoSSV2015} showed that increasing the bonus payment has diminishing returns in terms of the work quality.
That is, the incremental cost is a convex function of the work quality.

More recently, Lahouti and Hassibi \cite{LahoutiH2016} considered the crowdsourcing problem as a
human-based computation problem where the main task is inference.
They formulated an information-theoretic framework, where
unreliable workers are modeled as parallel noisy communication channels.
They represented the queries of the workers and the final inference using a joint source channel encoding/decoding scheme.
Similarly, Khetan and Oh \cite{KhetanO2016} studied the tradeoff between budget and accuracy in crowdsourcing scenarios under the generalized Dawid-Skene model, where
they introduced an adaptive scheme to allocate a budget across
unreliable workers.

We observe that there is a tradeoff between cost (monetary payments, bonus) and fidelity (quality of
work) in a wide range of crowdsourcing scenarios.
In particular, assigning a task to several workers, distributing the limited budget among them, and fusing their unreliable outputs have been problems of interest in the crowdsourcing literature.
In this sense, our cost-fidelity formulation and repetition-based approaches may have relevance in crowdsourcing problems.

\section{Conclusion and Future Directions}
\label{sec:conc}
We considered fusing outcomes of several unreliable
computational units that perform the same task.
We modeled unreliability in a computational outcome using an additive perturbation,
where the fidelity is inversely related to the variance of the perturbation.
We investigated cost-performance tradeoffs achievable through repetition-based approaches.
Here, each computational unit incurs a baseline cost as well as
an incremental cost, which is a function of its fidelity.

We defined a class of repetition-based strategies, where any repetition-based strategy
distributes the cost across several unreliable computational units and fuses their outcomes
to produce a final output, where it incurs cost to perform the fusion operation.
We considered the MSE of each strategy in estimating the error-free computation.
In particular, we defined the optimal repetition-based strategy as
the one incurring the smallest cost while achieving the desired MSE performance.

When the cost is a convex function of fidelity,
the optimal repetition-based strategy may distribute cost across several less reliable
computational units instead of using a single more reliable unit under certain conditions.
For the classes of concave and linear cost functions we preserved
that the optimal strategy uses only a single and relatively reliable computational unit,
instead of a fusion of several less costly but less reliable units.

We assumed that outcomes produced by different computational units
are uncorrelated.
This framework can be extended to a correlated outcome setting.
When studying the fundamental tradeoff between cost and performance,
we assumed that the fusion operation is error-free.
We can extend this to the case where
the fusion operation also produces noisy results under cost and fidelity constraints.
Moreover, we focused on a particular fusion operation, i.e.,
linear combination, which is common in certain applications.
More generally, we can consider nonlinear fusion rules to compute
the final estimate of the error-free computation.
For instance, midrange \cite{JianhanPV2016} and median-of-means \cite{DevroyeLLO2015} estimators have been considered as alternatives to linear estimators under different scenarios to improve performance.
Extension of this setup would be of interest for different network topologies,
as opposed to the centralized fusion setting of this paper, as in \cite{XuR2016}.
\section*{Acknowledgements}
We thank Dmitri B. Chklovskii for providing data from \cite{MishchenkoHSMHC2010}.

\appendix
\numberwithin{equation}{section}

\section{Proof of Lemma~\ref{lemma:compare}}
\label{app:compare}
The MSE of the strategy $S_N$ with a given
$\vtheta\in\prl 0,\infty\prr^N$ is
\begin{align}
	\MSE \prl \vw, \vtheta \prr
	&= \E\sql \prl Y \prl \vw^T \vOne - 1\prr + \vw^T \vU \prr^2\sqr, \nn
\end{align}
where \eqref{eq:cont_single} is substituted in \eqref{eq:MSEdefined}.
Since $Y$ and $\vU$ are uncorrelated:
\begin{align}
	\MSE \prl \vw, \vtheta \prr
	=\E\sql Y^2 \sqr \prl \vw^T \vOne - 1\prr^2 + \vw^T \Sigma_{\vU} \vw,\label{eq:MSEunopt}
\end{align}
where
$\Sigma_{\vU}$ is the covariance matrix of the perturbation vector $\vU$.
If we impose the condition that $\vw^T\vOne = 1$ in \eqref{eq:MSEunopt}, then
\begin{align}
	\MSE \prl \vw, \vtheta \prr
	= \vw^T \Sigma_{\vU} \vw. \nn
\end{align}
To minimize this over weights that satisfy $\vw^T\vOne = 1$,
we first form the Lagrangian
\begin{align*}
	J\prl \vw, \lambda \prr = \frac{1}{2} \vw^T \Sigma_{\vU} \vw + \lambda \prl 1 - \vw^T \vOne \prr,
\end{align*}
and then compute the gradient with respect to $\vw$ to get
\begin{align*}
	\Sigma_{\vU} \vw - \lambda \vOne = 0,
\end{align*}
which is satisfied iff $\vw = \lambda \Sigma_{\vU}^{-1} \vOne$.
With $\vw^T\vOne = 1$, it yields
\begin{align*}
	\lambda = \frac{1}{\vOne^T \Sigma_{\vU}^{-1} \vOne},
\end{align*}
which yields the optimal weights
\begin{align*}
	\vwo = \frac{1}{\vOne^T \Sigma_{\vU}^{-1} \vOne} \Sigma_{\vU}^{-1} \vOne.
\end{align*}
Finally, when substituted in $\MSE\prl \vw,\vtheta \prr$, we achieve
\begin{align}
	\MSEo \prl \vtheta \prr
	= \vwo^T \Sigma_{\vU} \vwo
	= \frac{1}{\vOne^T \Sigma_{\vU}^{-1} \vOne}.\nn
\end{align}
The proof follows by noting
$\Sigma_{\vU} = \mathrm{diag}\prl \theta_1^{-1},\ldots,\theta_N^{-1} \prr$.

\section{Proof of Lemma~\ref{lemma:uniform}}
\label{app:uniform}
We solve this optimization problem using the method of Lagrange multipliers, where we first form
the Lagrangian
\begin{align*}
	&J\prl \theta_{1},\ldots,\theta_{N},\lambda \prr
	\triangleq \sum_{i=1}^{N} C\prl \theta_{i}\prr 
	+ \lambda \prl \tinv - \sum_{i=1}^N\theta_{i} \prr.
\end{align*}
Then, we set the derivative of the Lagrangian with respect to $\theta_{j}$ to $0$, which is given by
\begin{align*}
	&\frac{\partial J}{\partial \theta_{j}}\prl \theta_{1},\ldots,\theta_{N},\lambda \prr\nn
	=C^{\prime}\prl \theta_{j}\prr -\lambda =0,
\end{align*}
for each $j=1,\ldots,N$. Hence the necessary conditions for optimality are given by
$
	\lambda =C^{\prime}\prl \theta_{j}\prr
$
for $j=1,\ldots,N$.

Here, we note that the cost function $C\prl \theta\prr$ is convex
and strictly increasing in $\theta$, and
its derivative $C^{\prime}\prl \theta\prr$ is nondecreasing.
This implies that it is invertible, so we can write
\begin{align*}
	\theta_{j} = \prl C^{\prime} \prr^{-1} \prl \lambda\prr,
\end{align*}
for each $j=1,\ldots,N$, where $\prl C^{\prime} \prr^{-1}$ is the inverse of the function
$C^{\prime}$.
That is, $\theta_{1}=\cdots=\theta_{N}$.
Moreover, by imposing the MSE constraint, we get
$
	\theta_j = \prl \tau N \prr^{-1}
$
for any $j=1,\ldots,N$, which yields the desired result.

\section{Proof of Lemma~\ref{lemma:convexTotal}}
\label{app:convexTotal}
We first differentiate the total cost function as
\begin{align*}
	\frac{ \partial \costt \prl a\prr}{\partial a }
	= G\prl \frac{1}{\tau a}\prr - \frac{1}{\tau a} G^{\prime} \prl \frac{1}{\tau a}\prr
	+ \cmin + D^{\prime} \prl a \prr.
\end{align*}
We next find its second derivative as
\begin{align*}
	&\frac{ \partial^2 \costt \prl a\prr}{\partial a^2 }\nn\\
	&= - \frac{1}{\tau a^2} G^{\prime} \prl \frac{1}{\tau a}\prr
	+ \frac{1}{\tau a^2} G^{\prime} \prl \frac{1}{\tau a}\prr
	+ \frac{1}{\tau^2 a^3} G^{\prime\prime} \prl \frac{1}{\tau a}\prr
	+ D^{\prime\prime} \prl a \prr \nn\\
	& = \frac{1}{\tau^2 a^3} G^{\prime\prime} \prl \frac{1}{\tau a}\prr
	+ D^{\prime\prime} \prl a \prr,
\end{align*}
which is nonnegative since the incremental cost function $G\prl \cdot \prr$ and the fusion cost function
$D\prl \cdot\prr$ are both convex and $a>0$.

\section{Proof of Lemma~\ref{lemma:changingMSE}}
\label{app:changingMSE}
We first observe that from \eqref{eq:V}
\begin{align*}
	V^{\prime} \prl \tau \prr
	& = -\frac{1}{\tau^2} G^{\prime} \prl \tau^{-1}\prr - \frac{1}{\tau^3} G^{\prime\prime}\prl \tau^{-1} \prr
	+ \frac{1}{\tau^2} G^{\prime} \prl \tau^{-1}\prr \nn\\
	& =- \frac{1}{\tau^3} G^{\prime\prime}\prl \tau^{-1} \prr \leq 0,
\end{align*}
for any $\tau>0$, as $G\prl \cdot \prr$ is convex and twice differentible.
Thus, the function $V\prl \tau \prr$ is decreasing on $\prl 0,\infty\prr$.
We next note that
\begin{align*}
	\lim_{\tau \rightarrow \infty} V\prl \tau \prr
	&= \lim_{\tau \rightarrow \infty} \prl \tinv G^{\prime}\prl \tinv \prr - G\prl \tinv \prr \prr \nn\\
	&= \lim_{\tau \rightarrow \infty} \tinv G^{\prime}\prl \tinv \prr - G\prl 0 \prr 
	= 0,
\end{align*}
since $G\prl 0 \prr = 0$ and $G^{\prime}\prl 0 \prr$ is finite.
Therefore, $V\prl \tau \prr$ is nonnegative on $\prl 0,\infty\prr$.
This implies that the function $V\prl \tau \prr$
either converges to a finite limit (if and only if $V\prl \tau \prr$ is bounded on $\prl 0,\infty \prr$),
or is unbounded as $\tau \rightarrow 0$.

\section{Proof of Lemma~\ref{lemma:subadditive}}
\label{app:subadditive}
Suppose that $\lambda\in[0,1]$.
Since $f$ is concave, we have
\begin{align*}
	f\prl \lambda x\prr &= f\prl \lambda x + (1-\lambda)0\prr\\
					&\geq \lambda f(x)+(1-\lambda)f(0)\geq \lambda f(x).
\end{align*}
Then, for any $x,y>0$, we can write
\begin{align*}
	f(x)+f(y) &= f\prl (x+y)\frac{x}{x+y}\prr+f\prl (x+y)\frac{y}{x+y}\prr.\nn\\
	              & \geq \frac{x}{x+y}f\prl x+y \prr + \frac{y}{x+y}f\prl x+y \prr \\ 
	              &= f(x+y),
\end{align*}
where we use $x/(x+y),y/(x+y)\in[0,1]$.

\bibliographystyle{unsrt} 
\bibliography{abrv,conf_abrv,mad_lib}
\end{document}